\newtheorem{theorem}{Theorem}[section]
\newtheorem{lemma}{Lemma}[section]
\newtheorem{remark}{Remark}[section]
\newtheorem{corollary}{Corollary}[section]
\newtheorem{example}{Example}[section]
\newcommand{\e}{{\rm e}}
\newcommand{\argmin}{\operatornamewithlimits{argmin}}
\newcommand{\maximize}{\operatornamewithlimits{maximize~~~}}
\newcommand{\minimize}{\operatornamewithlimits{minimize~~~}}
\newcommand{\subjectto}{\operatornamewithlimits{subject~to~~~}}
\newcommand{\remove}[1]{}
\newtheorem{theorem}{Theorem}[section]
\begin{document}
\title{Adaptive Spatial Aloha, Fairness and Stochastic Geometry}
\author{
\IEEEauthorblockN{Francois Baccelli}
\IEEEauthorblockA{
  Depts. of Mathematics and ECE\\
  University of Texas at Austin, USA\\
  and \\
  INRIA Rocquencourt Paris, France \\
  Email: francois.baccelli@austin.utexas.edu}
  \and
 \IEEEauthorblockN{Chandramani Singh}
 \IEEEauthorblockA{
  INRIA-TREC\\
  23 Avenue d'Italie\\
  CS 81321\\
  75214 Paris Cedex 13, France\\
  Email: chandramani.singh@inria.fr}
  }

\maketitle

\begin{abstract}
This work aims at combining adaptive protocol design,
utility maximization and
stochastic geometry. We focus on a spatial adaptation of
Aloha within the framework of ad hoc networks. We consider
quasi-static networks in which mobiles learn the local topology
and incorporate this information to adapt their medium access
probability~(MAP) selection to their local environment. We
consider the cases where nodes cooperate in a distributed way
to maximize the global throughput or to achieve either
proportional fair or max-min fair medium access.
In the proportional fair case, we show that nodes can compute
their optimal MAPs as solutions to certain fixed point equations.
In the maximum throughput case, the optimal MAPs are obtained
through a Gibbs Sampling based algorithm.
In the max min case, these are obtained
as the solution of a convex optimization problem.
The main performance analysis result of the paper
is that this type of distributed adaptation can be analyzed
using stochastic geometry in the proportional fair case.
In this case, we show that, when the nodes form a homogeneous
Poisson point process in the Euclidean plane,
the distribution of the optimal MAP
can be obtained from that of a certain shot noise process
w.r.t. the node Poisson point process
and that the mean utility can also be derived from this distribution.
We discuss the difficulties to be faced for
analyzing the performance of the other cases (maximal
throughput and max-min fairness).
Numerical results illustrate our findings and
quantify the gains brought by spatial adaptation in
such networks.
\end{abstract}

\remove{

\category{H.4}{Information Systems Applications}{Miscellaneous}
\category{D.2.8}{Software Engineering}{Metrics}[complexity measures, performance measures]

\terms{Theory}

\keywords{ACM proceedings, \LaTeX, text tagging} 

}

\section{Introduction}
Stochastic geometry has recently been used for the analysis
and performance evaluation of wireless~(ad hoc as well as cellular) networks;
in this approach, one models node locations as a spatial point process,
e.g., homogeneous Poisson point processes, and one
computes various network statistics, e.g.,  interference,
successful transmission probability, coverage~(or, outage)
probability etc. as spatial averages. This often leads
to tractable performance metrics that are amenable to parametric optimization
with respect to network parameters (node density, protocol parameters, etc.).
More precisely, this approach yields spatial averages
of the performance metrics for given network parameters;
then the parameters can be chosen to optimize performance.
This approach takes a macroscopic view of the network with
the underlying assumption that all nodes in the network have
identical statistical characteristics.

In practice, due to randomness and heterogeneity in networks,
nodes need to adapt to local spatial and temporal conditions~(e.g.,
channel conditions and topology) to reach optimum network wide
performance. For example, nodes in wireless LANs adjust their
window sizes based on acknowledgment feedback; in cellular networks
nodes are scheduled based on channel conditions
and adapt their transmit powers based on the measured
SINRs, which in turn depend on the transmit powers set by other nodes.
In all such scenarios, distributed adaptive algorithms are used
to reach a desired network wide operating point e.g. that maximizing
some utility.
While the behavior of such
distributed optimization protocols is often well understood on a given
topology, there are usually no analytical characterizations of the
statistical properties of the optimal state in large
random and heterogeneous networks.

The main aim of this work is to use
stochastic geometry to study spatial adaptations
of medium access control in Aloha that aim at optimizing
certain utilities.
While we identify a
utility for which stochastic geometry can be used to compute the
spatial distribution of MAP and the expected utility, we are far
from being able to do so for all types of utilities within
the $\alpha$-fair class and we discuss the difficulties
to be faced.

Let us start with a review of the
state of the art on Aloha.
Wireless spectrum is well known to be
a precious and scarce shared resource.
Medium Access Control~(MAC) algorithms are employed
to coordinate access to the shared
wireless medium. An efficient MAC protocol should ensure
high system throughput, and should also distribute the
available bandwidth fairly among
the competing nodes. The simplest of the MAC protocols,
Aloha and slotted Aloha, with a "random access" spirit,
were introduced by Abramson~\cite{ctrltheory-comnets.abramson70aloha}
and Roberts~\cite{ctrltheory-comsnets.roberts75aloha-slots-capture}
respectively. In these protocols, only one node could successfully
transmit at a time. Reference~\cite{ctrltheory-wireless.baccelli-etal06aloha-multihop-wireless}
modeled node locations as spatial point processes, and also modeled
channel fadings, interferences and SINR based reception.
This allowed for spatial reuse and multiple simultaneous successful
transmissions depending on SINR levels at the
corresponding receivers. All the above protocols prescribe
identical attempt probabilities for all the nodes.
Reference~\cite{ctrltheory-wireless.baccelli-etal09spatial-opportunistic-aloha}
further proposed opportunistic Aloha in which
nodes' transmission attempts are modulated by their
channel conditions.

Among the initial attempts of MAP adaptation in Aloha, reference~\cite{ctrltheory-wireless.hazek85-decentralized-control-multiaccess}
analyzed protocol model and proposed stochastic approximation based strategies
that were based on receiver feedback and were aimed at stabilizing the network.
References~\cite{ctrltheory-wireless.baccelli-etal06aloha-multihop-wireless,
ctrltheory-wireless.baccelli-etal09spatial-opportunistic-aloha} also
optimized nodes'
attempt probabilities~(or thresholds) in order to maximize the
spatial density of successful transmissions.
Reference~\cite{ctrltheory-wireless.hsu-su11channel-aware-aloha} analyzed
both plain and opportunistic Aloha
in a network where all the nodes communicate to one access point.
They assumed statistically identical Rayleigh faded channels
with no dependence on geometry~(i.e., no path loss components).
They demonstrated a paradoxical behavior where
plain Aloha yields better aggregate throughput than the opportunistic one.
Reference~\cite{ctrltheory-wireless.mohsenian-rad-et-al10sinr-random-access} also
studied optimal random access with SINR based reception. However,
they considered constant channel gains. They developed a centralized
algorithm that maximizes the network throughput, and also an algorithm
that leads to max-min fair operation.
Reference~\cite{ctrltheory-wireless.wang-kar04max-min-fair-aloha}
modeled network as an undirected graph and studied Aloha under
the protocol model.
They designed distributed algorithms that are
either proportional fair or max-min fair.
Reference~\cite{gametheory-wireless.hanawal-etal12medium-access-games}
built upon the model
of~\cite{ctrltheory-wireless.baccelli-etal06aloha-multihop-wireless},
and formulated the channel access problem as a non-cooperative game
among users.
They considered throughput and delay as performance metrics and proposed
pricing schemes that induce socially optimum behavior at equilibrium.
However, they set time average quantities~(e.g., throughput, delay)
as utilities~(or costs),
and concentrated on symmetric Nash equilibria.
Consequently, in their analysis,
dependence on local conditions vanishes.


In none of the above Aloha models, nodes account for both wireless channel
randomness and local topology for making their random access decisions,
as we do in the present paper.\footnote{In view of this distinction, we refer to the spatial
Aloha protocol of~\cite{ctrltheory-wireless.baccelli-etal06aloha-multihop-wireless} as
plain Aloha.}

There is a vast literature on the modeling
of CSMA by stochastic geometry which will not be
reviewed in detail here.
The very nature of this MAC protocol is adaptive as each node senses
the network and acts in order to ensure
that certain exclusion rules are satisfied,
namely that neighboring nodes do not
access the channel simultaneously.
However, CSMA as such is designed to guarantee a reasonable scheduling,
not to optimize any utility of the throughput.
The closest reference to our work is probably~\cite{DBLP:journals/questa/BaccelliLRSSW12}
where the authors study an adaptation of the
exclusion range and of the transmit power
of a CSMA node to the location of the closest
interferer. This adaptation aims at
maximizing the mean number
of nodes transmitting per unit time and space~(while respecting the above exclusion rules).
This mean number is however only a surrogate of the rate.
In addition, the adaptation is only w.r.t. to the location of the nearest interferer.

We study spatial adaptation of Aloha in ad hoc networks.
The network setting is described in Section~\ref{sec:network-model}.
We consider quasi-static networks in which mobiles
learn the topology, and incorporate this information
in their medium access probability~(MAP) selection.
We consider the cases where nodes are benevolent and cooperate
in a distributed way to maximize the
global network throughput or to reach either a proportional fair or
max-min fair sharing of the network resources.
We analyze the case where nodes account only for their closest interferers,
for all nodes in a given ball around them, or even all nodes in the network.

Section~\ref{sec:aggr-interf} is focused on the distributed
algorithms that maximize the aggregate throughput or
lead to max-min fairness in such networks.
In the proportional fair case,
we show that nodes can compute the optimal
MAPs as solutions to certain
fixed point equations.
In the maximum throughput case, the optimal MAPs are obtained
through a Gibbs Sampling based algorithm.
In the max min case, the optimal MAPs are obtained
as the solution of a convex optimization problem.

Section~\ref{ss:stocg} contains the stochastic geometry results.
The model features nodes forming a realization of a
homogeneous Poisson point process in the Euclidean plane.
We compute the MAP distribution in such a network
in the proportional fair case using shot noise field theory.
To the best of our knowledge, this distribution is the first
example of successful combination of stochastic geometry and
adaptive protocol design aimed at optimizing ceratin utility function
within this Aloha setting. We also show that
the mean value of the logarithm of the throughput
obtained by a typical node can be derived from this distribution.
Finally, we discuss the difficulties to be faced in order
to extend the result to other types of utilities.

The numerical results are gathered in Section~\ref{ss:numerics}.
The aim of this section is two-fold: 1) check the analytical
results against simulation and 2) quantify the gains brought
by adaption within this setting.

\section{Network Model}
\label{sec:network-model}

\begin{figure}[h]

\includegraphics[width=3.0in,height=2.5in]{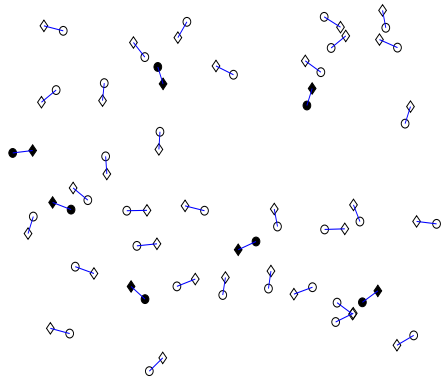}
\caption{A snapshot of bipolar MANET with Aloha as the medium access protocol.
The {\it diamonds} represent transmitters, and the connected {\it circles} the corresponding receivers.
The solid diamonds represent the nodes that are transmitting in a slot.}
\label{fig:bipole-model}
\end{figure}

We model the ad-hoc wireless network as a set of transmitters and
their corresponding receivers, all located
in the Euclidean plane. This is often referred to as
``bipole model''~\cite[Chapter~16]{stochproc-wireless.baccelli-blaszczyszyn09stochastic-geometry-wireless-networks-2}.
There are $N$ transmitter-receiver pairs communicating over a shared channel.
The transmitters follow the slotted version of the
Aloha medium access control~(MAC) protocol~(see Figure~\ref{fig:bipole-model}).
A transmitter, in each transmission attempt,
sends one packet which occupies one slot.
Each transmitter uses unit transmission power.
We assume that each node has an infinite backlog
of packets to transmit to its receiver.
The Euclidean distance between transmitter $j$ and receiver
$i$ is $r_{ji}$, and the path-loss exponent is $\alpha$~($\alpha > 2$).
We also assume Rayleigh faded channels with
$h_{ji}$ being the random fading between
transmitter $j$ and receiver
$i$. Moreover, we assume that the random variables $h_{ij}, 1 \leq i \leq N, 1 \leq j \leq N$
are independent and identically distributed with mean $1/\mu$.\footnote{The independence
assumption is justified if the distance between two receivers is larger than
the coherence distance of the wireless
channel~\cite{stochproc-wireless.baccelli-blaszczyszyn09stochastic-geometry-wireless-networks-2}.
We assume this to be the case.} Thus all $h_{ji}$s have cumulative distribution function~(cdf)
$F(x) = 1 - e^{-\mu x}$ with $x \geq 0$.
\remove{
The $i$th transmitter, whenever it accesses the
medium, transmits with power $p_i$.
We assume no power control, and instead, adapt medium access probabilities
based on the perceived interference.
The power gain between $i$th transmitter and $j$th receiver
is $h_{ij}$, and is assume to be constant over the scale of interest.}
All the receivers are also subjected to white Gaussian thermal noise
with variance $w$, which is also constant across slots.
We assume that a receiver successfully receives the packet of the corresponding transmitter
if the received SINR exceeds a threshold $T$.

Let $e_i$ be the indicator variable indicating whether transmitter $i$
transmits in a slot, and $p_i$ be $i$'s medium access probability.
Thus $\mathbb{P}(e_i = 1) = p_i.$
When node $i$ transmits, the received SINR at the corresponding receiver
is
\[
\gamma_i = \frac{h_{ii} r_{ii}^{-\alpha}}{\sum_{j \neq i} e_j h_{ji} r_{ji}^{-\alpha}  + w}.
\]
Then the probability of successful reception $q_i$ can be calculated as follows.
\begin{align*}
\lefteqn{\mathbb{P}\left(\gamma_i \geq T | \{(h_{ji},e_j): j \neq i\} \right)} \\
& = \mathbb{P} \left(h_{ii} \geq \sum_{j \neq i} e_j h_{ji} \left(\frac{r_{ji}}{r_{ii}}\right)^{-\alpha} T + \frac{wT}{r_{ii}^{-\alpha}} \bigg| \{(h_{ji},e_j): j \neq i\} \right) \\
& = \exp \left(-\mu T \left(\sum_{j \neq i} e_j h_{ji} \left(\frac{r_{ji}}{r_{ii}}\right)^{-\alpha} + \frac{w}{r_{ii}^{-\alpha}}  \right)\right).
\end{align*}
Thus
\begin{align*}
q_i & = \mathbb{E}_{\{(h_{ji},e_j): j \neq i\}}\exp \left(-\mu T \left(\sum_{j \neq i} e_j h_{ji} \left(\frac{r_{ji}}{r_{ii}}\right)^{-\alpha} + \frac{w}{r_{ii}^{-\alpha}}  \right)\right) \\
& = \e^{\frac{-\mu wT}{r_{ii}^{-\alpha}}}\prod_{j \neq i} \mathbb{E}_{(h_{ji},e_j)}\exp \left(-\mu e_j h_{ji} \left(\frac{r_{ji}}{r_{ii}}\right)^{-\alpha} T \right) \\
& = \e^{\frac{-\mu wT}{r_{ii}^{-\alpha}}}\prod_{j \neq i} \mathbb{E}_{h_{ji}}\left((1 - p_j) + p_j \exp \left(-\mu h_{ji} \left(\frac{r_{ji}}{r_{ii}}\right)^{-\alpha} T \right) \right) \\
& = \e^{-\mu w T r_{ii}^{\alpha}} \prod_{j \neq i} \left((1 - p_j) + \frac{p_j}{1 + 1 / b_{ji}} \right),
\end{align*}
where
$
b_{ji} = \frac{1}{T} \left(\frac{r_{ji}}{r_{ii}}\right)^{\alpha}.
$
Further simplifying,
\begin{equation}
\label{eqn:succ-prob}
q_i = \e^{-\mu w T r_{ii}^{\alpha}} \prod_{j \neq i}\left(1 - \frac{p_j}{1 + b_{ji}}\right).
\end{equation}
Then, the rate or throughput of transmitter $i$ is given by $p_i q_i$.

The thermal noise appears merely as a constant multiplicative factor in the
expression for the successful transmission probability~(see~\eqref{eqn:succ-prob}).
Moreover, in interference limited networks, the impact of thermal
noise is negligible as compared to interference.
We focus on such networks, and thus we ignore the thermal noise factor throughout.

\section{Adaptive Spatial Aloha and Fairness}
\label{sec:aggr-interf}
In this section, we analyze adaptations of spatial Aloha that maximize
aggregate throughput or achieve proportional fairness or max-min fairness.

\subsection{Maximum Throughput Medium Access}
\label{sec:aggr-interf-max-thput}
The throughput maximizing medium access probabilities solve the following
optimization problem.
\begin{align*}
\maximize  & \Theta := \sum_i p_i \prod_{j \neq i}\left(1 - \frac{p_j}{1 + b_{ji}}\right) , \\
\subjectto & 0 \leq p_i \leq 1, \ i \in \mathcal{N}.
\end{align*}

We first argue that the optimum in the above optimization problem
is attained at one of the vertices of the hypercube formed by the
constraint set. To see this, suppose $\mathbf{p^{\ast}} \in [0,1]^{\mathcal{N}}$
is an optimal solution, and $p^{\ast}_i \in (0,1)$ for some $i \in \mathcal{N}$.
Clearly,
\begin{align*}
\lefteqn{\frac{\partial \Theta}{\partial p_i} \Big|_{\mathbf{p} = \mathbf{p^{\ast}}}} \\
 & = \prod_{j \neq i}\left(1 - \frac{p^{\ast}_j}{1 + b_{ji}}\right) - \sum_{j \neq i} \frac{p^{\ast}_j}{1 + b_{ij}} \prod_{k \neq i,j} \left(1 - \frac{p^{\ast}_k}{1 + b_{kj}}\right) \\
 & = 0.
\end{align*}
Since the partial derivative is independent of $p_i$, $p_i$ can be set to either $0$ or $1$
without reducing the value of the objective function. This proves our claim.
In the following we focus only on such
extreme solutions. Then the above problem is
equivalent to finding an $\mathcal{M} \subset{N}$ such that
$p_i = 1$ if and only if $i \in \mathcal{M}$ is an optimal solution.
Thus we are interested in
\[
\maximize_{\mathcal{M} \subset \mathcal{N}} \sum_{i \in \mathcal{M}} \prod_{j \in \mathcal{M} \setminus \{i\}} \left(1 -\frac{1}{1 + b_{ji}} \right).
\]

\paragraph*{An iterative solution}
We can pose this problem as a strategic form game with the users as
players~\cite{gametheory.osborne-rubinstein99game-theory}. For
each player its action $a_i$ lies in $\{0,1\}$, and the utility function
$u_i: \mathbf{a} \mapsto \mathbb{R}$
is given by
\begin{align*}
u_i(0,\mathbf{a}_{-i}) = 0&, \\
u_i(1,\mathbf{a}_{-i}) = 1& - \prod_{j \in \mathcal{M} \setminus \{i\}}\left(1 - \frac{1}{1 + b_{ji}}\right)\\
                         & - \sum_{j \in \mathcal{M} \setminus \{i\}} \frac{1}{1 + b_{ij}} \prod_{k \in \mathcal{M} \setminus \{i,j\}} \left(1 - \frac{1}{1 + b_{kj}}\right).
\end{align*}
This is a potential game with the above objective function as the potential
function~\cite{gametheory.monderer-shapley96potential-games}.
Thus the best response dynamics converges to a Nash equilibrium.
This algorithm can be implemented in a distributed fashion if each node $i$ knows
$b_{ij}, b_{ji}$ for all $j$, and also $\mathcal{M}$ and
$\prod_{k \in \mathcal{M} \setminus \{j\}}(1 - (1 +  b_{kj})^{-1})$ for all $j \in \mathcal{M}$ after each
iteration. However, a Nash equilibrium can be a
suboptimal solution to the above optimization problem. To alleviate this problem, we propose a
Gibbs sampler based distributed algorithm, wherein each node $i$ chooses
action $1$ with probability
\[
p_i = \frac{\e^{u_i(1,\mathbf{a}_{-i})/{\tau}}}{1 + \e^{u_i(1,\mathbf{a}_{-i})/{\tau}}}.
\]
The parameter $\tau$ is called the temperature. The Gibbs
sampler dynamics converges to a steady state which is the
Gibbs distribution associated with the aggregate throughput and
the temperature $\tau$~\cite{stochproc.bremaud99markov-chains}. In other words, we are led to the following
distribution on the action profiles:
\[\pi_{\tau}(\mathbf{a}) =  u \e^{\sum_{i \in \mathcal{N}}u_i(\mathbf{a})},\]
where $u$ is a normalizing constant.
When $\tau$ goes to $0$ in an appropriate way~(i.e.,
as $1/\log(1 + t)$, where $t$ is the time), the distribution $\pi_{\tau}(\cdot)$
converges to a dirac mass at the action profile $\mathbf{a^{\ast}}$ with maximum
aggregate utility if it is unique. Notice that the aggregate utility $\sum_{i \in \mathcal{N}}u_i(\mathbf{a})$
equals the aggregate throughput. Thus the action profile $\mathbf{a^{\ast}}$ is a solution
to the original throughput optimization problem.
\begin{remark}
The first two terms in the utility function $u_i(1,\mathbf{a}_{-i})$
can be seen as ``selfish'' part of user $i$, whereas the last
summation term is ``altruistic'' part. The user makes a decision based
on whether the ``selfish'' part dominates or viceversa.
 \end{remark}

 \begin{remark}
 In a quasi-static network where topology continuously changes, although at a slower time scale,
 different sets of nodes are likely to be scheduled to transmit under different topologies.
 Thus, in terms of long term performance, maximum throughput medium access is not
 grossly unfair.
 \end{remark}

\subsection{Proportional Fair Medium Access}
\label{sec:aggr-interf-prop-fair}
The proportional fair medium access problem can be formulated as follows.
\begin{align*}
\maximize & \sum_i \log{(p_i q_ i)}, \\
\subjectto & 0 \leq p_i \leq 1, \ i \in \mathcal{N}.
\end{align*}
The objective function can be rewritten as
\[
\sum_i \left( \log{p_i} + \sum_{j \neq i}\log \left(1 - \frac{p_j}{1 + b_{ji}}\right) \right).
\]
We thus have a convex separable optimization problem. The partial derivative
of the objective function with respect to $p_i$ is
\begin{equation}
\label{pfair-fixed-point}
\frac{1}{p_i} - \sum_{j \neq i}\frac{1}{1 + b_{ij} - p_i},
\end{equation}
which is continuous and decreasing in $p_i$ over $[0,1]$.
We conclude that at optimality, for each $1 \leq i \leq N$,
\begin{equation*}
p_i = \begin{cases}
      f_i(p_i) :=  \left(\sum_{j \neq i}\frac{1}{1 + b_{ij} - p_i} \right)^{-1} & \mbox{if } f_i(1) < 1, \\
      1 & \mbox{otherwise.}
      \end{cases}
\end{equation*}
Observe that user $i$'s optimal attempt probability is independent of the
attempt probabilities of other users. In particular, if $f_i(1) < 1$, user $i$ can
perform iterations $p_i^{k+1} = f_i(p_i^k)$ autonomously.
Furthermore,
\[
f_i'(p_i) = - \sum_{j \neq i}\frac{1}{(1 + b_{ij} - p_i)^2} \left(\sum_{j \neq i}\frac{1}{1 + b_{ij} - p_i} \right)^{-2}.
\]
Clearly, $|f_i'(p_i)| < 1$, i.e., $f_i(\cdot)$ is a contraction. Thus the fixed point iterations
converge to the optimal $p_i$ starting from any $p_i \in [0,1]$.
\begin{remark}
The characterization of the optimal attempt probabilities reflects the altruistic behavior
of users. More precisely, user $i$'s attempt probability is a function of
$\{b_{ij}, j \neq i\}$ which are measures of $i$'s interference to all other users.
In particular, if $\sum_{j \neq i}\frac{1}{b_{ij}} < 1$, i.e., if $i$'s transmission
does not cause significant interference to the other users, then
$i$ transmits in all the slots. Unlike the throughput maximization problem, there is no
``selfish'' component in the decision making rule.
\end{remark}

\begin{remark}
As the target SINR $T \to \infty$, $b_{ij} \to 0$ for all $i,j$,
and the proportional fair attempt probabilities satisfy
\begin{align*}
\frac{1}{p_i} = \sum_{j \neq i}\frac{1}{1 - p_i} = \frac{N-1}{1- p_i}
\end{align*}
for all $i \in \mathcal{N}$. This yields $p_i = \frac{1}{N}$ for all $i \in \mathcal{N}$.
This is expected, because in the limiting case
a transmission can succeed if and only if there is no other
concurrent transmission. This is hence
Aloha without spatial reuse, and it is well known
that in this case, the optimal access probability is $1/N$
asymptotically~\cite{commnet.bertsekas-gallager92data-networks}.
 \end{remark}

\subsection{Max-min Fair Medium Access}
\label{sec:aggr-interf-max-min-fair}
Our analysis in this section
follows~\cite{ctrltheory-wireless.wang-kar04max-min-fair-aloha,ctrltheory-wireless.wang-etal06lexicographic-max-min-fair-access}. The max-min fair medium access problem can be formulated as
\begin{align*}
\maximize & \theta, \\
\subjectto & \theta \leq p_i \prod_{j \neq i}\left(1 - \frac{p_j}{1 + b_{ji}}\right), \ i \in \mathcal{N},
\end{align*}
where constraint functions are defined for all $\mathbf{p} \in [0,1]^{\mathcal{N}}$.
The following is an equivalent convex optimization problem~(see~\cite{ctrltheory-wireless.wang-kar04max-min-fair-aloha}
for details):
\begin{align*}
\minimize & \frac{1}{2}\theta^2, \\
\subjectto & \theta \leq \log p_i + \sum_{j \neq i}\log \left(1 - \frac{p_j}{1 + b_{ji}}\right), \ i \in \mathcal{N}.
\end{align*}

The Lagrange function of this problem is given by~\cite{ctrltheory.boyd-vandenberghe04convex-optimization}
\[
\frac{1}{2}\theta^2 + \sum_{i \in \mathcal{N}} \lambda_i\left(\theta - \log p_i - \sum_{j \neq i}\log \left(1 - \frac{p_j}{1 + b_{ji}}\right) \right),
\]
with $\lambda_i \geq 0, i \in \mathcal{N}$ being the Lagrange multipliers.

Minimization of the Lagrange function~(which is concave in $\mathbf{p}$ and $\theta$) gives
\begin{align}
p_i = &\begin{cases}
       \lambda_i \left(\sum_{j \neq i}\frac{\lambda_j}{1 + b_{ij} - p_i} \right)^{-1} & \mbox{if } \frac{1}{\lambda_i}\sum_{j \neq i}\frac{\lambda_j}{b_{ij}} > 1, \\
       1 & \mbox{otherwise.}
      \end{cases}
\label{eqn:aggr-interf-p-update} \\
\theta = &-\sum_{i \in N}\lambda_i.
\label{eqn:aggr-interf-theta-update}
\end{align}
 Wang and Kar~\cite{ctrltheory-wireless.wang-kar04max-min-fair-aloha} suggest that the Lagrange multipliers be updated using the gradient projection method. More precisely, for all $i \in \mathcal{N}$,
\begin{align}
\lambda_i(n+1) = &\Bigg[\lambda_i(n) + \beta(n) \\
                 &\left. \left(\theta - \log p_i - \sum_{j \neq i}\log \left(1 - \frac{p_j}{1 + b_{ji}}\right) \right)\right]^+, \label{eqn:aggr-interf-lambda-update}
\end{align}
where $\beta(n)$ is the step size at the $n$th iteration.
Further more,~\cite[Theorem~2]{ctrltheory-wireless.wang-kar04max-min-fair-aloha}
implies that a solution arbitrary close to an optimal solution can be reached via
appropriate choice of step sizes. However, all the users need to exchange variables in order to perform updates.

Finally, the {\it directed link graph} corresponding to our network is a directed graph in which
each vertex stands for a user~(i.e., a transmitter-receiver pair) in the network. There
is an edge from vertex $i$ to vertex $j$ in the directed link graph if transmission of user $i$
affects the success of transmission of user $j$.
Two vertices $i$ and $j$ are said to be connected if either of the following two
conditions hold:
\begin{enumerate}
\item there is an edge from $i$ to $j$ or viceversa,
\item there are vertices $v_0 = i, v_1,\dots,v_{n-1},v_n = j$ such that
      $v_m$ and $v_{m+1}$ are connected for $m = 0,1,\dots,n-1$.
\end{enumerate}
Clearly, the directed link graph for our network model
is a complete graph; for any pair of vertices
$i$ and $j$ there is an edge from $i$ to $j$ and also from $j$ to $i$. In particular,
the directed link graph is a single {\it strongly connected component}~\cite{ctrltheory-wireless.wang-etal06lexicographic-max-min-fair-access}.
Thus~\cite[Corollary~1]{ctrltheory-wireless.wang-etal06lexicographic-max-min-fair-access}
implies that the above optimization also obtains the lexicographic max-min fair
medium access probabilities that yield identical rates for all the users.

\subsection{Closest Interferer Case}
\label{sec:close-interf}
Note that a user needs to know the entire topology, and in a few cases, also needs to
communicate with all the nodes to implement the adaptation rules
developed in Sections~\ref{sec:aggr-interf-max-thput}-\ref{sec:aggr-interf-max-min-fair}. In this section,
we carry out analysis under the simplifying assumption
that the aggregate interference
at a receiver is dominated by the transmission from the closest interferer.
This is a reasonable approximation in a moderately dense network, specifically
when the path loss attenuations are high. Throughout this section, we use the
notation
\begin{align*}
& c(i) := \argmin_{j \neq i}r_{ji}, \\
& C(i) := \{j: c(j) = i\}, \\
\end{align*}
for all $1 \leq i \leq N$. In words, $c(i)$ is the strongest interferer of node $i$,
and $C(i)$ is the set of nodes to which node $i$ is the strongest interferer.
We assume that there is always a unique  $c(i)$ for each $i$.
Then, accounting only for the nearest interferer,
the approximate probability of successful transmission for node $i$ is
\[\tilde{q}_i = 1 - \frac{p_{c(i)}}{1 + b_{c(i)i}}.\]
The analysis of Sections~\ref{sec:aggr-interf} can be adapted
to this simplified scenario.

\subsubsection{Maximum Throughput Medium Access}
\label{sec:close-interf-max-thput}
The throughput maximization problem can now be posed as follows.
\begin{align*}
\maximize  & \tilde{\Theta} := \sum_i p_i \tilde{q}_ i , \\
\subjectto & 0 \leq p_i \leq N, \ i \in \mathcal{N}.
\end{align*}

As in Section~\ref{sec:aggr-interf-max-thput}, we can argue that some $\mathbf{p}^{\ast} \in \{0,1\}^{\mathcal{N}}$
attains the optimal throughput. Again, an equivalent optimization problem is
\[
\maximize_{\mathcal{M} \subset \mathcal{N}}\sum_{i \in \mathcal{M}} \left(1 - \frac{\mathds{1}\{c(i) \in \mathcal{M}\}}{1 + b_{c(i)i}} \right),
\]
 or alternatively,
 \[
\maximize_{\mathcal{M} \subset \mathcal{N}}\sum_{i \in \mathcal{M}} \left(1 - \sum_{j \in C(i)} \frac{\mathds{1}\{j \in \mathcal{M}\}}{1 + b_{ij}} \right).
\]

We now formulate a strategic form game among users, with action sets $\{0,1\}$ and
utility functions given by
\begin{align*}
u_i(0,\mathbf{a}_{-i}) = & 0, \\
u_i(1,\mathbf{a}_{-i}) = & 1 - \frac{a_{c(i)}}{1 + b_{c(i)i}} - \sum_{j \in C(i)} \frac{a_j}{1 + b_{ij}}.
\end{align*}
Again a Gibbs sampler based algorithm yields the optimal set of the transmitting users.
Also, user $i$ only needs to know the distances of user $c(i)$ and all the receivers in $C(i)$
and their actions to make its decision.

\begin{remark}
Notice that user $i$ must choose $a_i =1$ if
\[
1 - \frac{1}{1 + b_{c(i)i}} - \sum_{j \in C(i)} \frac{1}{1 + b_{ij}} > 0.
\]
Such users can set their actions to $1$, and need not undergo Gibbs sampler
updates.
\end{remark}
\paragraph*{Discussion}
Consider a scenario where a node's closest interferer does not transmit, i.e., has
zero attempt probability. Nonetheless, this node always has an active closest
interferer~(unless there are no other nodes in the network). A better
approximation of the success probabilities, and hence of the throughput, is
obtained by always accounting for the closest active interferer. Towards this,
let us define
\begin{align*}
& c(i,\mathcal{M}) := \argmin_{j \in \mathcal{M}, j \neq i}r_{ji}, \\
& C(i,\mathcal{M}) := \{j \in \mathcal{M}: c(j,\mathcal{M}) = i\}, \\
\end{align*}
for all $i \in \mathcal{M}$.
We are now faced with the following optimization problem.
\[
\maximize_{\mathcal{M} \subset \mathcal{N}}\sum_{i \in \mathcal{M}} \left(1 - \sum_{j \in C(i,\mathcal{M})} \frac{1}{1 + b_{ij}} \right).
\]
 We can now define  users' utility functions as follows.
\begin{align*}
u_i(0,\mathbf{a}_{-i}) =&  0, \\
u_i(1,\mathbf{a}_{-i}) =& 1 - \frac{1}{1 + b_{c(i,\mathcal{M})i}} \\
                        & - \sum_{j \in C(i,\mathcal{M} \cup \{i\})} \left( \frac{1}{1 + b_{ij}} - \frac{1}{1 + b_{c(j,\mathcal{M})j}}\right),
\end{align*}
where $\mathcal{M} = \{j \in \mathcal{N}: j \neq i, a_j = 1\}$.
The analogous distributed algorithm~(Gibbs sampler based) can again be shown to lead to
the optimal solution.

\subsubsection{Proportional Fair Medium Access}
\label{sec:close-interf-prop-fair}
We now aim to solve the following optimization problem.
\begin{align*}
\maximize \sum_i \log{(p_i \tilde{q}_ i)}, \\
\subjectto 0 \leq p_i \leq 1, \ i \in \mathcal{N}.
\end{align*}
Following the discussion in Section~\ref{sec:aggr-interf-prop-fair}, we obtain
\begin{equation*}
p_i = \begin{cases}
      \left(\sum_{j \in C(i)}\frac{1}{1 + b_{ij} - p_i} \right)^{-1} & \mbox{if } \sum_{j \in C(i)}\frac{1}{b_{ij}} > 1, \\
      1 & \mbox{otherwise.}
      \end{cases}
\end{equation*}
Again, if $\sum_{j \in C(i)}\frac{1}{b_{ij}} > 1$, iterations $p_i^{k+1} = f_i(p_i^k)$
converge to the optimal $p_i$ starting from any $p_i \in [0,1]$.

\begin{remark}
If $\sum_{j \in C(i)}\frac{1}{b_{ij}} < 1$, i.e., if $i$'s transmission
does not cause significant interference to the users for whom $i$ is closest interferer, then
$i$ transmits in all the slots.
The same user may not transmit~(in any slot) under the maximum throughput objective if
$1 - \frac{1}{1 + b_{c(i)i}} \approx 0$ and $c(i)$ and users in $C(i)$ transmit.
\end{remark}

We can have explicit formulae for the attempt probabilities in a few
special cases.
\begin{enumerate}
\item Suppose $C(i)$ is singleton for each $i$. If $C(i) = \{j\}$, then
\begin{equation*}
p_i = \begin{cases}
       \frac{1 + b_{ij}}{2} & \mbox{if } b_{ij} < 1, \\
      1 & \mbox{otherwise.}
      \end{cases}
\end{equation*}

\item {\it Linear Network Topology:~}
We now consider a scenario where nodes are placed along a line, say $\mathbb{R}$, and
are indexed sequentially. We also assume that for any node $i$ the potential interferers are
the two immediate neighbors $i-1$ and $i+1$. This also amounts to assuming $C(i) = \{i-1, i+1\}$
for all $i$. Then, assuming $\frac{1}{b_{i,i-1}} + \frac{1}{b_{i,i+1}} > 1$, the proportional fair attempt probability of node $i$ satisfies
\[
\frac{1}{p_i} = \frac{1}{1+b_{i,i-1}-p_i} + \frac{1}{1+b_{i,i+1}-p_i}.
\]
This is quadratic equation in $p_i$, which on solving gives\footnote{The other root is greater than 1, and thus
is not a valid probability.}
\begin{align*}
p_i
    &= \frac{2 + b_{i,i-1} + b_{i,i+1} - \sqrt{(b_{i,i-1} - b_{i,i+1})^2 \atop + (1 + b_{i,i-1})(1 + b_{i,i+1})}}{3}.
\end{align*}
\end{enumerate}

\subsubsection{Max-min Fair Medium Access}
\label{sec:close-interf-max-min-fair}
Similar to Section~\ref{sec:aggr-interf-max-min-fair}, the max-min fair medium access problem is
\begin{align*}
\maximize & \theta, \\
\subjectto & \theta \leq p_i \left(1 - \frac{p_{c(i)}}{1 + b_{c(i)i}}\right), \ i \in \mathcal{N}.
\end{align*}
Again, the constraint functions are defined for all $\mathbf{p} \in [0,1]^{\mathcal{N}}$.

Let us recall the definition of the directed link graph associated
with the network. In this section, we only account
for the interference due the closest interferer. Thus there is an
edge from vertex $i$ to vertex $j$ if and only if $j \in C(i)$.
We assume that the directed link graph is connected~(i.e., all the vertices in the graph are
connected with each other). If it is not connected, the max-min fair medium access problem
on the entire graph decomposes into separate max-min
fair medium access problems on each of the connected subgraphs, which can be solved independently.

We now pursue the following convex optimization problem which is equivalent to
the above max-min fair medium access optimization problem.
\begin{align*}
\minimize & \frac{1}{2}\sum_{i \in \mathcal{N}}\theta_i^2, \\
\subjectto & \theta_i \leq \log p_i + \log \left(1 - \frac{p_{c(i)}}{1 + b_{c(i)i}}\right), \ i \in \mathcal{N}, \\
           & \theta_i \leq \theta_j, j \in C(i), j = c(i).
\end{align*}
The last set of constraints along with the connected assumption~(of the directed link graph)
forces $\theta_i$ to be equal for all $i \in \mathcal{N}$. This confirms equivalence to the
initial optimization problem.
Now the Lagrange function is
\begin{align*}
\frac{1}{2}\sum_{i \in \mathcal{N}}\theta_i^2 &+ \sum_{i \in \mathcal{N}} \lambda_i\left(\theta_i - \log p_i - \log \left(1 - \frac{p_{c(i)}}{1 + b_{c(i)i}}\right) \right) \\
                      &+ \sum_{i \in \mathcal{N}} \sum_{j \in C(i) \cup \{c(i)\}} \mu_{ij}(\theta_i - \theta_j),
\end{align*}
where $\lambda_i \geq 0, \mu_{ij} \geq 0, j \in C(i) \cup \{c(i)\}, i \in \mathcal{N}$, are the Lagrange multipliers.
An approach similar to Section~\ref{sec:aggr-interf-max-min-fair} prescribes
the following update rules. For all $i \in \mathcal{N}, j \in C(i) \cup \{c(i)\}$,
\begin{align}
p_i = &\begin{cases}
       \lambda_i \left(\sum_{j \in C(i)}\frac{\lambda_j}{1 + b_{ij} - p_i} \right)^{-1} & \mbox{if } \frac{1}{\lambda_i}\sum_{j \in C(i)}\frac{\lambda_j}{b_{ij}} > 1, \\
       1 & \mbox{otherwise},
      \end{cases}
\label{eqn:close-interf-p-update} \\
\theta_i = &\begin{cases}
       -\lambda_i -&\sum_{j \in C(i) \cup \{c(i)\}} (\mu_{ij} - \mu_{ji})  \\
                   &\mbox{if }  \lambda_i + \sum_{j \in C(i) \cup \{c(i)\}}(\mu_{ij} - \mu_{ji}) > 0, \\
       0 & \mbox{otherwise},
       \end{cases}
\label{eqn:close-interf-theta-update} \\
\lambda_i(n+1) = &\bigg[\lambda_i(n) + \beta(n) \bigg(\theta_i \nonumber \\
                 & \left. \left. - \log p_i - \log \left(1 - \frac{p_c(i)}{1 + b_{c(i)i}}\right) \right)\right]^+, \label{eqn:close-interf-lambda-update} \\
\mu_{ij}(n+1)  = &[\mu_{ij}(n) + \beta(n)(\theta_i - \theta_j)]^+, \label{eqn:close-interf-mu-update}
\end{align}
where $\beta(n)$ is the step size at the $n$th iteration as before.
Observe that any user $i$ can perform updates~\eqref{eqn:close-interf-p-update}-\eqref{eqn:close-interf-mu-update} via
local information exchange. More precisely, it needs to communicate only with user $c(i)$
and the users in $C(i)$.

\subsection{A Note on Distributed Implementation}
The aim of the present paper is primarily of theoretical
nature and it is beyond our scope
to discuss implementation issues. Let us however stress
that the discussed adaptations of the MAP are
implementable. We will focus on the
proportional fair case in view
the main focus of the paper.

Assume each receiver has a distinctive pilot signal
with fixed power. Since we assumed a quasi-static network in which the nodes move at a slower time scale, each
node can then learn the distance that separates
it from a given receiver by listening to
its pilot signal and by performing a time average over~(so as to smooth out fading).
Once this data is available for all receivers, a
given transmitter can then solve
the key fixed point equation that characterizes
its optimal MAP. Distinctive pilot signals can
be obtained by a collection of orthogonal codes
chosen at random by the receivers.
In practice, it is enough for a transmitter to detect
the ``dominant'' receivers~(i.e., those within
a certain distance to it), so that the scheme
will work in an infinite network
with a finite~(properly tuned) number of such codes.

\section{Stochastic Geometry Analysis}
\label{ss:stocg}

\subsection{Network and Communication Model}
We now assume that the transmitting nodes are scattered on
the Euclidian plane according to a homogeneous Poisson
point process of intensity $\lambda$. For each transmitter, its
corresponding receiver is at distance $r_0$ in a random direction.
The traffic and channel models are the same as
in Section~\ref{sec:network-model}.
As before the transmitters use slotted Aloha to access the channel,
and a receiver successfully receives the packet from its transmitter
if the received SINR exceeds a threshold $T$. Finally, transmitters adapt
their attempt probabilities as described in Section~\ref{sec:aggr-interf}.

Each transmitter is associated with
a multi dimensional mark that carries information about the adaptive transmission probability and
the transmission status.
Let $\tilde{\Phi} = \{X_n,Z_n\}$ denote a marked Poisson point, where
\begin{itemize}
\item $\Phi = \{X_n\}$ denotes the Poisson point process of
intensity $\lambda$, representing the location of transmitters in the
Euclidean plane.
\item $\{Z_n = (\phi_n, p_n, e_n)\}$ denote the marks of
the Poisson point process $\tilde{\Phi}$ , which consist of three
components:
\begin{itemize}
\item $\{\phi_n\}$ denote the angles from transmitters to receivers.
These angles are i.i.d. and uniform on $[0,2\pi]$
and independent of $\Phi$. We will call them the primary marks.

\item $\{p_n\}$ denote the MAPs of the nodes;
$p_n$ is a secondary mark (i.e.
a functionals of $\Phi$ and its primary marks,
see below).

\item $\{e_n\}$ are indicator functions that take value
one if a given node decides to transmit in a given
time slot, and zero otherwise.
Clearly, $\mathbb{P}(e_n = 1) = p_n = 1 - \mathbb{P}(e_n = 0)$.
In particular, given $p_n$, $e_n$ is
independent of everything else
including $\{e_m\}_{m \ne n}$.

\end{itemize}

\end{itemize}

The locations of the receivers will be denoted by
$\Phi^r = \{Y_n = X_n+(r_0,\phi_n)\}$
with $(r_0,\phi) := (r_0 \cos \phi,  r_0 \sin \phi)$.
It follows from the displacement theorem
\cite{stochproc-wireless.baccelli-blaszczyszyn09stochastic-geometry-wireless-networks-1} that $\Phi^r $ is also a homogeneous Poisson point
process of intensity $\lambda$.

The above assumptions will be referred to as the
Poisson model. We will also consider below
a more general case where the above marked
point process is just stationary.

\subsection{Proportional Fair Spatial Aloha}
\subsubsection{MAP distribution}
Let us consider response functions $L:\mathbb{R}^2 \times \mathbb{R}^2 \rightarrow \mathbb{R}^+ $
defined for each $0 \leq p \leq 1$ as follows
\[L_\rho(x,y) = \frac{\rho}{\frac{\Vert x- y\Vert^{\alpha}}{T r_0^{\alpha}} + 1 - \rho}.\]
For all $0 \leq \rho \leq 1, x \in \mathbb{R}^2$, the shot noise field
$J_{\Phi^r}(\rho,x)$ associated with the
above response function and the
marked point process $\tilde{\Phi}$ is
\begin{equation*}
J_{\Phi^r}(\rho,x) = \int_{\mathbb{R}^2}L_\rho(x,y)\Phi^r({\rm d} y)
              =\sum_{Y_n \in \Phi^r} L_\rho(x,Y_n).
\end{equation*}
Notice that this shot noise is not that representing the interference
at $x$.
It rather measures the effect of the presence of a transmitter
at $x$ on the whole set of receivers.

Consider a typical node at the origin, $X_0 = 0$,
with marks $p_0, \phi_0$. Let $\mathbb{P}^0$
denote the Palm distribution of the stationary marked point process
$\tilde{\Phi}$~\cite[Chapter~1]{stochproc-wireless.baccelli-blaszczyszyn09stochastic-geometry-wireless-networks-1}.
The fixed point equation determining the MAP of node $X_0=0$ reads~(see~\eqref{pfair-fixed-point})
\[ \frac{1}{p_0}=
\sum_{n \ne 0}
\frac{1}{\frac{\Vert Y_n\Vert^{\alpha}}{T r_0^{\alpha}} + 1 - p_0}.\]
We have a similar equation for each node and the
sequence $\{p_n\}$ is readily seen to be a sequence of
marks of $\Phi$ and its primary marks.

It follows directly from monotonicity arguments that
\[ \left\{\frac{1}{p_0}  <\frac 1 \rho\right\} \quad \mbox{iff}
\quad \left\{ \sum_{n \ne 0}
\frac{\rho}{\frac{\Vert Y_n\Vert^{\alpha}}{T r_0^{\alpha}} + 1 - \rho}
<1\right\}.\]
Notice that we have not used the specific assumptions on the
point process so far.
Hence we have the following general connection
between the optimal MAP distribution and the shot noise
$J_{\Phi^r}$:
\begin{theorem}
For all stationary marked point processes
$\widetilde \Phi$~(not necessarily Poisson),
for all $0 < \rho < 1$,
\begin{align*}
\mathbb{P}^0(p_0 > \rho) &= \mathbb{P}^0\left(J_{\Phi^r \setminus {\{Y_0\}}}(\rho,0) < 1\right), \\
\mbox{and} \ \ \ \ \ \ \ \ \ \ \ \ \ \ \ \ \ \ \ \ \ \ \ \ \ \ & \ \ \ \ \ \ \ \ \ \ \ \ \ \ \ \ \ \ \ \ \ \ \ \ \ \ \ \ \ \ \ \ \ \ \ \ \ \ \ \ \ \ \ \ \ \ \ \ \ \ \ \ \\
\mathbb{P}^0(p_0 = 1) &= \mathbb{P}^0\left(J_{\Phi^r \setminus {\{Y_0\}}}(1,0) < 1\right),
\end{align*}
with $\mathbb{P}^0$ the Palm distribution of $\tilde{\Phi}$.
\end{theorem}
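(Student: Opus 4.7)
The plan is to leverage directly the fixed-point characterization of the proportional-fair MAP obtained in Section~\ref{sec:aggr-interf-prop-fair}, translated to the Palm setting, and combine it with a simple monotonicity argument in $\rho$. Concretely, I would first transport the equation
\[
\frac{1}{p_0} \;=\; \sum_{n\ne 0}\frac{1}{1 + b_{0n} - p_0}
\]
to the typical-node formulation used here. Under $\mathbb{P}^0$, the origin hosts a typical transmitter $X_0=0$ with receiver $Y_0=(r_0,\phi_0)$, and the relevant cross-terms involve the distances from $X_0=0$ to the other receivers $Y_n$, which are exactly $\|Y_n\|$. Hence $b_{0n}$ becomes $\|Y_n\|^{\alpha}/(T r_0^{\alpha})$, and the right-hand side rewrites as $\frac{1}{p_0}J_{\Phi^r\setminus\{Y_0\}}(p_0,0)$ for the response function $L_\rho$ chosen in the definition. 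This is the content of the equation displayed just before the theorem.

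Next, I would introduce the auxiliary function $g(\rho):=J_{\Phi^r\setminus\{Y_0\}}(\rho,0)$ viewed as a function of $\rho\in(0,1]$ (the realization of $\Phi^r$ being fixed). Two observations are routine: $g$ is continuous and strictly increasing on $(0,1]$ (each summand $\rho/(c_n+1-\rho)$ with $c_n=\|Y_n\|^\alpha/(Tr_0^\alpha)>0$ is strictly increasing), and $g(0)=0$. The fixed-point equation of Section~\ref{sec:aggr-interf-prop-fair} then reads: $p_0$ is the unique value in $(0,1)$ satisfying $g(p_0)=1$ when $g(1)>1$, and $p_0=1$ otherwise. By the strict monotonicity of $g$, for any $0<\rho<1$ we have $p_0>\rho$ if and only if $g(\rho)<1$, which is precisely $J_{\Phi^r\setminus\{Y_0\}}(\rho,0)<1$. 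Taking $\mathbb{P}^0$-probabilities gives the first equality.

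For the second equality I would let $\rho\uparrow 1$ in the first and use continuity of $g$ at $1$ together with stationarity to conclude $\mathbb{P}^0(p_0=1)=\mathbb{P}^0(g(1)<1)$, noting that the boundary event $\{g(1)=1\}$ (where the fixed point and the corner coincide) is absorbed harmlessly by the strict inequality used in the statement. Finally, I would emphasize that this argument nowhere used the Poisson structure of $\Phi$, only stationarity of $\widetilde\Phi$ (to make $\mathbb{P}^0$ well defined) and the per-realization monotonicity of $g$; this justifies the generality claimed in the theorem. The only mildly delicate step is this corner-case bookkeeping at $\rho=1$; everything else reduces to the monotonicity of a strictly increasing scalar function, so I do not expect a substantive obstacle.
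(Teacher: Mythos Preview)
Your approach is essentially the same as the paper's: the paper's entire argument is the one-line monotonicity observation displayed immediately before the theorem, namely that $\{1/p_0<1/\rho\}$ iff $\{\sum_{n\ne 0}\rho/(\|Y_n\|^\alpha/(Tr_0^\alpha)+1-\rho)<1\}$, with the remark that nothing Poisson-specific has been used. Your proposal simply unpacks this by naming $g(\rho)=J_{\Phi^r\setminus\{Y_0\}}(\rho,0)$, checking strict monotonicity term by term, and splitting into the two cases $g(1)\le 1$ and $g(1)>1$; this is a more explicit write-up of the same idea, and your handling of the corner case $\{g(1)=1\}$ at $\rho=1$ is in fact more careful than the paper, which states the second identity without comment.
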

We now use the fact that $\tilde{\Phi}$ is an independently
marked Poisson point
process~\cite[Definition~2.1]{stochproc-wireless.baccelli-blaszczyszyn09stochastic-geometry-wireless-networks-1}.
From Slivnyak's theorem~\cite[Theorem~1.13]{stochproc-wireless.baccelli-blaszczyszyn09stochastic-geometry-wireless-networks-1},
\[\mathbb{P}^0\left(J_{\Phi^r \setminus {\{Y_0\}}}(\rho,0) < 1\right) = \mathbb{P}\left(J_{\Phi^r}(\rho,0) < 1\right)\]
for all $0 \leq \rho \leq 1$. Consequently,
\begin{align*}
\mathbb{P}^0(p_0 > \rho) &= \mathbb{P}\left(J_{\Phi^r}(\rho,0) < 1\right), \\
\mbox{and} \ \ \ \ \ \ \ \ \ \ \ \ \ \ \ \ \ \ \ \ \ \ \ \ \ \ & \ \ \ \ \ \ \ \ \ \ \ \ \ \ \ \ \ \ \ \ \ \ \ \ \ \ \ \ \ \ \ \ \ \ \ \ \ \ \ \ \ \ \ \ \ \ \ \ \ \ \ \ \\
\mathbb{P}^0(p_0 = 1) &= \mathbb{P}\left(J_{\Phi^r}(1,0) < 1\right).
\end{align*}

It follows from~\cite[Proposition~2.6]{stochproc-wireless.baccelli-blaszczyszyn09stochastic-geometry-wireless-networks-1} and from the fact that
$\Phi^r$ is a homogeneous Poisson point process that
one can write the Laplace transform $\mathcal{L}_{J(\rho,0)}(s)$
of the shot noise $J_{\Phi^r}(\rho,0)$ as
\begin{equation}
\mathcal{L}_{J(\rho,0)}(s) = \exp \left\{-2 \pi \lambda \int_0^{\infty}\left(1 - \e^{-\frac{s\rho\bar{r}_0}{r^{\alpha} + (1 - \rho)\bar{r}_0}}\right)r {\rm d} r \right\},
\label{eq:lapl}
\end{equation}
where $\bar{r}_0 := T r_0^{\alpha}$.
\begin{theorem}
\label{thm:map-distribution}
Under the above Poisson assumptions,
the attempt probability of the typical node has the distribution
\begin{equation}
\mathbb{P}^0(p_0 > \rho) =\frac 1 {2\pi}
\int_{-\infty}^{\infty}\mathcal{L}_{J(\rho,0)}(i w)
\frac{\e^{iw} -1}{iw} {\rm d} w,
\label{eq:Parsev}
\end{equation}
with $\mathcal{L}_{J(\rho,0)}(\cdot)$ given by (\ref{eq:lapl}).
\end{theorem}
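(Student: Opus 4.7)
The plan is to combine the Palm-space identification already established just before the theorem with a Plancherel/Parseval type Fourier inversion for the cumulative distribution function of a nonnegative random variable.

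First, the excerpt has already shown, using the fixed-point characterization and Slivnyak's theorem for the independently marked Poisson process, that
\[
\mathbb{P}^0(p_0 > \rho) = \mathbb{P}\bigl(J_{\Phi^r}(\rho,0) < 1\bigr),
\]
and has computed the Laplace transform $\mathcal{L}_{J(\rho,0)}(\cdot)$ in \eqref{eq:lapl} via the Laplace functional of the homogeneous Poisson point process. So the only thing left to prove is the inversion identity
\[
\mathbb{P}\bigl(J_{\Phi^r}(\rho,0) < 1\bigr) = \frac{1}{2\pi}\int_{-\infty}^{\infty} \mathcal{L}_{J(\rho,0)}(iw)\,\frac{e^{iw}-1}{iw}\,dw.
\]

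The second step is a Parseval calculation. Since $\alpha>2$, the response $L_\rho(0,\cdot)$ is integrable on $\mathbb{R}^2$, so $J := J_{\Phi^r}(\rho,0)$ is a.s.\ finite and, because $\Phi$ has infinitely many points a.s.\ and each contributes a positive amount, $J>0$ a.s.; thus $\{J<1\}$ and $\{0<J<1\}$ agree up to a null set. Writing $\mathbb{P}(J<1) = \mathbb{E}[\mathds{1}_{(0,1)}(J)]$ and using the Fourier convention $\hat f(w)=\int e^{-iwx}f(x)\,dx$, the characteristic object $\mathcal{L}_{J}(iw) = \mathbb{E}[e^{-iwJ}]$ is precisely the Fourier transform of the law of $J$, while
\[
\int_0^1 e^{-iwx}\,dx = \frac{1-e^{-iw}}{iw},
\]
whose complex conjugate is $(e^{iw}-1)/(iw)$. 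Parseval's identity applied to the pair (law of $J$, indicator of $(0,1)$) then yields exactly \eqref{eq:Parsev}.

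The main obstacle is technical justification of this Parseval step, since $\mathds{1}_{(0,1)}$ is not in $L^1$ and $(e^{iw}-1)/(iw)$ is not absolutely integrable, so the right-hand side must be read as an improper integral. I would handle this by mollifying the indicator by a compactly supported smooth bump $g_\epsilon\to\mathds{1}_{(0,1)}$ in $L^2$ (and pointwise away from $\{0,1\}$), applying the classical Parseval identity to $(f_J,g_\epsilon)$, and passing to the limit $\epsilon\downarrow 0$ using dominated convergence: the left-hand side converges by boundedness of $g_\epsilon$ and the fact that $J$ has no atoms at $0$ or $1$ (which follows, in the Poisson case, from the smoothness of its distribution — a standard consequence of the Lévy–Khintchine form of $\mathcal{L}_J$), and the right-hand side converges because $|\mathcal{L}_J(iw)|$ decays fast enough (it is $\exp$ of a negative real part growing like $|w|^{2/\alpha}$ as $|w|\to\infty$, by a standard estimate on \eqref{eq:lapl}) to dominate the non-integrable tails of $\hat g_\epsilon(w)$ in the limit. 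This decay of $\mathcal{L}_{J}(iw)$ is the only quantitative input needed beyond formal Fourier duality, and is the step that genuinely uses the Poisson assumption rather than mere stationarity.
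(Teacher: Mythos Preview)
Your approach is essentially the same as the paper's: both reduce to $\mathbb{P}(J_{\Phi^r}(\rho,0)<1)$ and then apply Parseval's theorem to the pair (law of $J$, indicator of $[0,1]$), identifying $\mathcal{F}_{J}(w)=\mathcal{L}_{J}(iw)$ and computing $\mathcal{F}_{\mathds{1}_{[0,1]}}(w)=(1-e^{-iw})/(iw)$. The paper's proof is terser and simply assumes the shot noise has a density $g_\rho$ without justifying Parseval or the improper integral; your mollification argument and the $|w|^{2/\alpha}$ decay estimate on $\mathcal{L}_J(iw)$ supply the rigor the paper omits.
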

\begin{proof}
Let $g_{\rho}(\cdot)$ denote the density of the shot noise field
$J_{\Phi}(\rho,0)$. Then
\begin{equation*}
\mathbb{P}^0(p_0 > \rho) = \int_0^1 g_{\rho}(t) {\rm d} t
                         = \int_{-\infty}^{\infty} g_{\rho}(t) u(t)  {\rm d} t,
\end{equation*}
where $u(t) = 1$  if $ 0 \leq t \leq 1 $ and 0 otherwise.
Now using Parseval's theorem
\[
\mathbb{P}^0(p_0 > \rho) = \frac{1}{2\pi}\int_{-\infty}^{\infty} \mathcal{F}_{J(\rho,0)}(w) \mathcal{F}^*_{u}(w) {\rm d} w,
\]
with $\mathcal{F}_{A}(w)=\mathbb{E}\exp(-iwA)$ the Fourier
transform of the real valued random
variable $A$ and $B^*$ the complex conjugate of $B$.
The claim follows after substituting
$
\mathcal{F}_{u}(w) = \frac{1 - \e^{-iw}} {iw}
$
and $\mathcal{F}_{J(\rho,0)}(w) = \mathcal{L}_{J(\rho,0)}(iw)$.
\end{proof}

\begin{remark}
\label{LEM:LAPLACE-TRANSFORM}
For $\alpha = 4$, the Laplace transform $\mathcal{L}_{J(p,0)}(s)$ can be simplified as
\begin{align*}
\lefteqn{\mathcal{L}_{J(p,0)}(s)}  \\
 &= \exp \left\{-2 \pi \lambda \sqrt{(1-p)T}r_0^2 \int_0^1 \frac{1 - \e^{-spv^2/(1-p)}}{v^2 \sqrt{1 - v^2}} {\rm d} v \right\}.
 \end{align*}
\end{remark}

\subsubsection{Mean Utility}
This subsection is devoted to the analysis of the mean value of the
logarithm of the throughput of the typical node:
$$ \theta= \mathbb{E}^0\log((p_0)) + \mathbb{E}^0\log((q_0)),$$
with $q_0$ defined in~\eqref{eqn:succ-prob}.
Since we know the cdf $f$ of $p_0$, the first term poses no problem.
The second term can be rewritten as~(see~\eqref{eqn:succ-prob})
\begin{equation*}
\mathbb{E}^0\log((q_0)) =
\mathbb{E}^0 \left[\sum_{n \ne 0} \log \left(1-\frac{p_n}{\frac{\Vert X_n - Y_0\Vert^{\alpha}}{T r_0^{\alpha}} + 1}\right)
\right].
\end{equation*}
Under the law $\mathbb{P}^0$, the points $\{X_n\}_{n \neq 0}$ of $\Phi$
form a homogeneous Poisson point process of intensity $\lambda$.
However, the marks $\{p_n\}_{n \ne 0}$ do not have the law identified in the last
section. In fact, the mark $p_n$ of a point $X_n~(n \neq 0)$ satisfies
the following modified fixed point equation:
$$ \frac{1}{p_n}=
\frac{1}{\frac{\Vert X_n - Y_0 \Vert^{\alpha}}{T r_0^{\alpha}} + 1 - p_n}
+
\sum_{m \ne 0,n}
\frac{1}{\frac{\Vert X_n-Y_m\Vert^{\alpha}}{T r_0^{\alpha}} + 1 - p_n},$$
with the convention that $p_n=1$ if there is no solution in $[0,1]$.
We can use the same argument as above to conclude that
$ \frac{1}{p_n}  < \frac 1 \rho$ iff
$$
\frac{\rho}{\frac{\Vert X_n - Y_0\Vert^{\alpha}}{T r_0^{\alpha}} + 1 - \rho} +
\sum_{m \ne 0,n}
\frac{\rho}{\frac{\Vert X_n- Y_m\Vert^{\alpha}}{T r_0^{\alpha}} + 1 - \rho}
<1
.$$

Conditioned on there being two nodes at $0$ and $x$,
the other points form a homogeneous Poisson point process of intensity $\lambda$.
This allows one to prove the following.

\begin{theorem}
\label{thm:map-typical-receiver}
Under the above Poisson assumptions, given
that there is a node at 0 and a node at $x\in \mathbb{R}^2$,
the attempt probability of the node at $x$
has the distribution
\[\mathbb{P}^{0,x}(p_x > \rho) =\frac 1 {2\pi}
\int_{-\infty}^{\infty}\mathcal{L}_{J_{x}(\rho,0)}(i w)
\frac{\e^{iw} -1}{iw} {\rm d}w,\]
with
\begin{align*}
\lefteqn{\mathcal{L}_{J_{x}(\rho,0)}(s) = } \ \ \ \ \ \ \ \ \ \\
 & \frac{1}{2\pi}\int_0^{2\pi} \exp
\left(-
\frac{s\rho\bar{r}_0}{||x - (r_0,\phi)||^{\alpha} + (1 - \rho)\bar{r}_0}
\right) {\rm d} \phi \\
&  \exp \left\{-2 \pi \lambda \int_0^{\infty}\left(1 - \e^{-\frac{s\rho\bar{r}_0}{r^{\alpha} + (1 - \rho)\bar{r}_0}}\right)r {\rm d}r \right\},
\end{align*}
and $(r_0,\phi) := (r_0 \cos \phi,  r_0 \sin \phi)$.
\end{theorem}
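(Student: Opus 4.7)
The plan is to mirror the proof of Theorem~\ref{thm:map-distribution}, but now working under the second-order Palm distribution $\mathbb{P}^{0,x}$. First I would apply the same monotonicity argument that was used to pass from the proportional-fair fixed point equation to a shot-noise inequality. Writing out the fixed point equation for $p_x$ conditioned on there being nodes at $0$ and $x$, and isolating the contribution of the receiver $Y_0=(r_0,\phi_0)$ of the node at the origin, I get that $\{p_x>\rho\}$ is equivalent to $\{J_x(\rho,0)<1\}$, where
\begin{equation*}
J_x(\rho,0) = \frac{\rho \bar r_0}{\|x-Y_0\|^\alpha + (1-\rho)\bar r_0}
+\sum_{m\neq 0,x}\frac{\rho \bar r_0}{\|x-Y_m\|^\alpha + (1-\rho)\bar r_0}.
\end{equation*}

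Next I would invoke Slivnyak's theorem twice (equivalently, the second-order Campbell/reduced Palm formula for Poisson processes): under $\mathbb{P}^{0,x}$ the remaining points $\{X_m\}_{m\neq 0,x}$ still form a homogeneous Poisson point process of intensity $\lambda$ on $\mathbb{R}^2$, and their primary marks $\{\phi_m\}_{m\neq 0,x}$ are still i.i.d.\ uniform on $[0,2\pi]$ and independent of the rest, and also independent of $\phi_0,\phi_x$. Hence the receiver process $\{Y_m\}_{m\neq 0,x}$ is, by the displacement theorem, again a homogeneous PPP of intensity $\lambda$, independent of the mark $\phi_0$.

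Then I would compute $\mathcal{L}_{J_x(\rho,0)}(s)$ by exploiting this independence. The first (singleton) term depends only on $\phi_0$, so its contribution to the Laplace transform is obtained by averaging $\exp(-s\,\cdot)$ over $\phi_0\sim\mathrm{Uniform}[0,2\pi]$, producing the angular integral appearing in the statement. The sum over $m\neq 0,x$ is a Poisson shot noise over the intensity-$\lambda$ PPP $\{Y_m\}$, whose Laplace functional is, by stationarity, identical to the one computed in~\eqref{eq:lapl} and yields the second factor. Multiplying the two independent factors gives the announced formula for $\mathcal{L}_{J_x(\rho,0)}(s)$. Finally, noting that the density $g_\rho$ of $J_x(\rho,0)$ satisfies $\mathbb{P}^{0,x}(p_x>\rho)=\int_0^1 g_\rho(t)\,{\rm d}t$, I apply Parseval's identity exactly as in the proof of Theorem~\ref{thm:map-distribution}, substituting $\mathcal{F}_u(w)=(1-e^{-iw})/(iw)$ and $\mathcal{F}_{J_x(\rho,0)}(w)=\mathcal{L}_{J_x(\rho,0)}(iw)$, to conclude.

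The main obstacle I expect is the rigorous bookkeeping around the two-point Palm distribution: one must verify carefully that under $\mathbb{P}^{0,x}$ the mark $\phi_0$ remains uniformly distributed and independent of the reduced configuration (so that the factorization of $\mathcal{L}_{J_x(\rho,0)}$ into the angular integral times the Poisson Laplace functional is legitimate), and that the Palm-distributed fixed point equation for $p_x$ does not pick up any hidden dependence on $\phi_x$ (it does not, since $p_x$'s equation only involves the distances from $x$ to the other receivers). Once these points are settled, the rest is a routine adaptation of Theorem~\ref{thm:map-distribution}.
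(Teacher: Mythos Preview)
Your proposal is correct and follows essentially the same route the paper sketches just before the statement: isolate the $Y_0$ term in the modified fixed point equation for $p_x$, use the monotonicity argument to convert $\{p_x>\rho\}$ into a shot-noise event, apply Slivnyak so that the remaining receivers form a homogeneous Poisson point process, and then factor the Laplace transform into the $\phi_0$-average times the Poisson Laplace functional before invoking Parseval as in Theorem~\ref{thm:map-distribution}. The bookkeeping concerns you flag about the two-point Palm distribution and the independence of $\phi_0$ are exactly the points the paper handles implicitly via the independently marked Poisson assumption and Slivnyak's theorem.
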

Due to the circular symmetry, the first integral in the expression of $\mathcal{L}_{J_{x}(\rho,0)}(s)$
depends on $x$ only through $\Vert x \Vert$.
Thus the density of $p_x$ also depends on $\Vert x \Vert$ only,
and it will be
denoted by $f_r$ when $\Vert x \Vert = r$.
The density of $p_0$ identified in the last subsection
will be denoted by $f$.
The main result of this section is:
\begin{theorem}
\label{THM:MEAN-UTILITY}
Under the above Poisson assumptions, in the
proportional fair case, the mean utility of a typical node
is
\begin{eqnarray}
\theta & = & \int_0^1 \log(u)f({\rm d}u)\nonumber \\
&  +&
\frac 1 {2\pi}
\int\limits_{\phi \in (0,2\pi)}
\int\limits_{x\in \mathbb{R}^2}
\int\limits_v
\log\left(1-\frac{v\overline r_0}{||x-(r_0,\phi)||^\alpha+\overline r_0}\right)
\nonumber \\
&  & \hspace{2cm} {\rm d}\phi
f_{||x-(r_0,\phi)||}({\rm d}v) {\rm d}x.
\end{eqnarray}
\end{theorem}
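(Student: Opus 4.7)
The plan is to split $\theta = \mathbb{E}^0 \log(p_0) + \mathbb{E}^0 \log(q_0)$ and compute each piece separately. The first summand is immediate from Theorem~\ref{thm:map-distribution}: since $f$ is the density of $p_0$ under $\mathbb{P}^0$, one has $\mathbb{E}^0 \log(p_0) = \int_0^1 \log(u) f(du)$, which is the first term in the claim.

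For the second summand, I would use \eqref{eqn:succ-prob} (with thermal noise dropped) to write
\begin{equation*}
\log q_0 = \sum_{n \ne 0}\log\Bigl(1 - \frac{p_n\,\bar r_0}{\|X_n - Y_0\|^{\alpha} + \bar r_0}\Bigr),
\end{equation*}
with $\bar r_0 = T r_0^{\alpha}$ and $Y_0 = (r_0,\phi_0)$ under $\mathbb{P}^0$. Taking the expectation and applying the Campbell--Mecke formula under the Palm distribution $\mathbb{P}^0$ converts the sum into an integral over $\mathbb{R}^2$ against the intensity $\lambda\,dx$:
\begin{equation*}
\mathbb{E}^0 \log q_0 = \lambda \int_{\mathbb{R}^2} \mathbb{E}^{0,x}\Bigl[\log\Bigl(1 - \frac{p_x\,\bar r_0}{\|x - Y_0\|^{\alpha} + \bar r_0}\Bigr)\Bigr]\,dx,
\end{equation*}
where $\mathbb{P}^{0,x}$ denotes the reduced two-point Palm distribution of $\tilde{\Phi}$ at $\{0,x\}$; under this law, $\Phi\setminus\{0,x\}$ is again homogeneous Poisson of intensity $\lambda$ by iterated Slivnyak, which is precisely the setting of Theorem~\ref{thm:map-typical-receiver}.

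Next, I would condition on the primary mark $\phi_0$, which is uniform on $[0, 2\pi]$ and independent of $\Phi\setminus\{0,x\}$. Once $\phi_0 = \phi$ is fixed, $Y_0 = (r_0,\phi)$ is deterministic, and the conditional density of $p_x$ under $\mathbb{P}^{0,x}$ depends only on $\|x - (r_0,\phi)\|$ by circular symmetry; Theorem~\ref{thm:map-typical-receiver} (together with the rotation-invariance remark that follows it) identifies this conditional density as $f_{\|x-(r_0,\phi)\|}$. Substituting and unconditioning on $\phi_0$ produces the factor $\frac{1}{2\pi}\int_0^{2\pi} d\phi$ and the density $f_{\|x-(r_0,\phi)\|}(dv)$ appearing in the statement, yielding the claimed expression.

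The main subtlety is the two-point Palm step: one must justify that, under $\mathbb{P}^{0,x}$, the conditional distribution of $p_x$ given $\phi_0$ is exactly the marginal distribution identified in Theorem~\ref{thm:map-typical-receiver} at the appropriate distance. This is valid because $p_x$ is determined by a fixed-point equation that depends on the Poisson configuration $\Phi\setminus\{0,x\}$ and on the distance $\|x - Y_0\|$ only; under $\mathbb{P}^{0,x}$ with $\phi_0=\phi$, both ingredients coincide with those in Theorem~\ref{thm:map-typical-receiver}. A routine integrability check (finiteness of the logarithmic moments against the shot-noise law, which follows from the Laplace-transform estimate \eqref{eq:lapl}) justifies the interchange of sum and expectation in the first display.
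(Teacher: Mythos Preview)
Your proposal is correct and follows essentially the same route as the paper's proof: both arguments hinge on Slivnyak's theorem and Campbell's formula to turn the Palm expectation of the sum $\sum_{n\ne 0}\log(\cdots)$ into a spatial integral, then identify the law of $p_x$ via Theorem~\ref{thm:map-typical-receiver}. The only cosmetic difference is the order of operations---the paper first conditions on $\phi_0=\phi$, applies Slivnyak to remove the point at $0$, and then Campbell, whereas you invoke Campbell--Mecke directly under $\mathbb{P}^0$ (producing the two-point Palm $\mathbb{P}^{0,x}$) and condition on $\phi_0$ afterwards; both orderings are valid and yield the same expression. Note incidentally that your Campbell step correctly produces a factor $\lambda\,{\rm d}x$, which also appears in the paper's own proof but is missing from the displayed statement of the theorem.
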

\begin{proof}
See Appendix~\ref{proof:thm-mean-utility}.
\end{proof}

\subsection{Discussion of the Other Cases}
A preliminary concern when trying to use Euclidean
stochastic geometry of the infinite plane in
the maximum throughput and max-min fairness cases
is that it is not clear whether the associated infinite
dimensional optimization problems make sense
in the first place. In the proportional fair case,
each node computes its optimizing MAP in one step
as the solution of a fixed point equation that is
almost surely well defined~(in terms of a shot noise) even
in the infinite Poisson population case.
Unfortunately, this does not extend to the other two cases.

This does not mean that there is no hope at all.
Consider for instance the maximum throughput case accounting only for the
closest interferer,
and further simplify it by measuring interference
at the transmitter rather than at the receiver.
In other words, consider the same optimization
problem as in Section~\ref{sec:close-interf-max-thput} but with
$c(i)$ being the closest transmitter to transmitter $i$~(rather than to receiver $i$).
Then, in the Poisson case, the infinite dimensional optimization
problem can be shown to reduce to a {\em countable
collection of finite optimization problems}.
This follows from the fact that there are
no infinite ``descending chains'' in a homogeneous Poisson
point process~\cite[Chapter~2]{stochproc.franceschetti-meester08random-networks}~(a {\it descending chain}
is a sequence of nodes $i_1,i_2,\ldots$ such that
$c(i_n)=i_{n+1}$ for all $n$).
As a result, the Poisson point process
can be decomposed into a countable collection
of finite ``descending trees'', where each path from the leaves to the root
is a descending chain.
The associated optimization is hence well defined
and the problem can be reduced to evaluating
the solution of the optimization problem of
Section~\ref{sec:close-interf-max-thput} on the typical descending tree.
Hence, there is hope to progress on this and on related
cases. This will however not be pursued here and is left for future research.

\section{Numerical Results}
\label{ss:numerics}

In this section, we study the proposed adaptive spatial Aloha schemes quantitatively.
We compute various metrics formulated through the stochastic geometry based analysis,
and we also perform simulation. The simulation not only validates the analytical model, but also
illustrates the performance of schemes for which
we do not have an analytical characterization.

\subsection{Computation of the Integrals}

We used Maple and Matlab to evaluate the integrals of Section
\ref{ss:stocg}. The infinite integral that shows up in
the expression of the Laplace transform (\ref{eq:lapl}) is handled
without truncation by Maple and Matlab.
The singularity at $w=0$ in the contour integrals
(\ref{eq:Parsev}) leveraging Parseval's theorem
is a false singularity and it is also
handled without further work
by either Maple or Matlab.
The Matlab code is particularly efficient and is
used throughout the analytical evaluations described
below.

\subsection{Simulation Setting}
We consider a two dimensional square plane with side length $L$, and
$N$ nodes placed independently over the plane according to the uniform
distribution; this corresponds to $\lambda = N/L^2$ in the stochastic
geometry model.\footnote{A finite snapshot of a Poisson random proces
would contain a Poisson distributed number of nodes. However, for large $\lambda L^2$,
the Poisson random variable with mean
$\lambda L^2$ is highly concentrated around its mean. Thus we can use $\lambda L^2$ nodes
 for all the realizations in our simulation.} Each node has its receiver randomly located on the
unit circle around it, again as per the uniform
distribution. Thus $r_{ii} = 1$ for all $i$. We set $\alpha = 4$
and $T = 10$. To nullify the edge effect, we take into account
only the nodes falling in the $L/2 \times L/2$ square around the center while computing
various metrics. While all other parameters remain, we vary $L$ and $N$ for different simulations.
For each parameter set we calculate the average of the performance metric of interest over $1000$ independent
network realizations.

\subsection{Joint Validation of the Analysis and the Simulation}
\label{subsec:validation}

We validate the analytical expression against the simulation for the case
of proportional fair medium access. For illustration, we  plot the cumulative distribution function~(c.d.f.)
of the MAP in Figure~\ref{fig:map-cdf-analysis}. Here we set $L = 40$ and consider two values of $N$, $N = 400$ and
$N = 800$, which correspond to $\lambda = 0.5$ and $\lambda = 0.25$ respectively.
The plots show that the stochastic geometry based formula~(see Theorem~\ref{thm:map-distribution}) quite accurately predicts the nodes' behavior in simulation.

\begin{figure}[h]
\centering
\includegraphics[width=3.0in,height=2.5in]{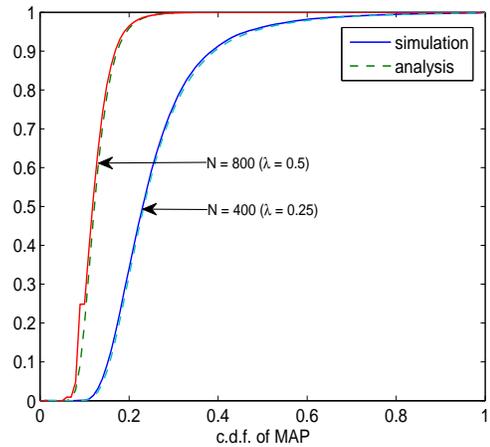}
\caption{Cumulative distribution function of the MAP for the proportional fair case.}
\label{fig:map-cdf-analysis}
\end{figure}

We also study the distributions $\mathbb{P}^{0,x}(p_x > \rho)$~(also referred to as $f_r$ for $\Vert x \Vert = r$)
defined in Theorem~\ref{thm:map-typical-receiver}.
Figure~\ref{fig:map-cdf-receiver} shows their plots for $\lambda = 0.25$ and two values of $\Vert x \Vert$,
$\Vert x \Vert = 1$ and $\Vert x \Vert = 10$.
Again the plots based on the analytical expression and those based on simulation closely match.
Clearly, under $\mathbb{P}^0$, nodes closer to the origin are more likely to be inactive.

\begin{figure}[h]
\centering
\includegraphics[width=3.0in,height=2.5in]{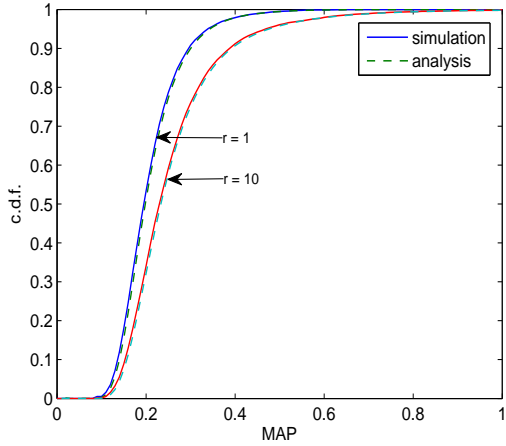}
\caption{Demonstration of $f_r$, the MAP distribution of a node at distance $r$ from origin, under $\mathbb{P}^0$.}
\label{fig:map-cdf-receiver}
\end{figure}

\subsection{Performance of the Adaptive Protocols}
\label{sec:performance}
In this section we illustrate the performance of
various adaptive schemes and their benefit over plain Aloha.
We compute the performance metrics via simulation and also through
analytical expressions whenever the latter are available. In such cases
the analytical results and the simulation validate each other.

First we set $L = 20$ and $N = 50$.
We consider the maximum throughput medium access, however, only
accounting for the closest interferers. In figure~\ref{fig:gibbs-closei}, we show steady state
behavior of our Gibbs sampling based algorithms; we have set the temperature
$\tau(t) = 1/\log(1 +t)$. As expected, the improved maximum throughput medium
access~(see the discussion at the end of Section~\ref{sec:close-interf-max-thput}) insures
a better exclusion behavior. Under this scheme, a lesser number of nodes transmit, and
neighboring nodes are unlikely to transmit simultaneously. So, this is expected to deliver better aggregate
throughput.

\begin{figure}[h]
\hspace{-.33in}
\includegraphics[width=4.0in,height=1.6in]{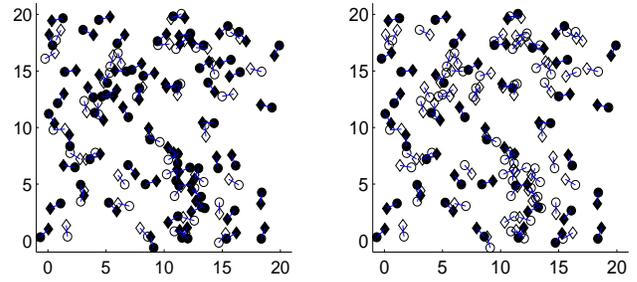}
\caption{Throughput maximizing medium access: There are 100 transmitter
receiver pairs. The {\it diamonds} represent
candidate transmitters, and the connected {\it circles} the corresponding receivers.
The solid diamonds represent the nodes that transmit in all the slots; others never
transmit. The left plot corresponds to the maximum throughput medium access and the right
one to its improved version.}
\label{fig:gibbs-closei}
\end{figure}

Now we keep $L$ fixed at $20$, but vary $N$ from $10$ to $100$; this corresponds to varying
$\lambda$ from $0.025$ to $0.25$ in the analytical expressions.
We evaluate the aggregate throughputs of various Aloha schemes including plain Aloha.
The average throughputs are plotted in Figure~\ref{fig:throughput}.
{\it Although in some of the schemes we derive the attempt probabilities only
considering the closest interferers, we always take into account the aggregate interference
while calculating the throughput}. When the number of nodes is small, both the throughput maximizing medium access
and plain aloha have identical performance; both prescribe attempt probabilities
close to one for all the nodes. When the number of nodes increases beyond $45$,
the throughput maximizing medium access significantly underestimates the interference,
and thus its performance deteriorates. On the other hand, the
aggregate interference based proportional fair scheme
significantly outperforms plain Aloha in terms of aggregate throughput also.
This benefit is sustained even as the number of nodes increases. We also notice that the improved
version of maximum throughput medium access~(see the discussion at the end of Section~\ref{sec:close-interf-max-thput})
yields best performance among all the schemes, and its
performance does not deteriorate until a much higher number of nodes.

\begin{figure}[h]
\centering
\includegraphics[width=3.0in,height=2.5in]{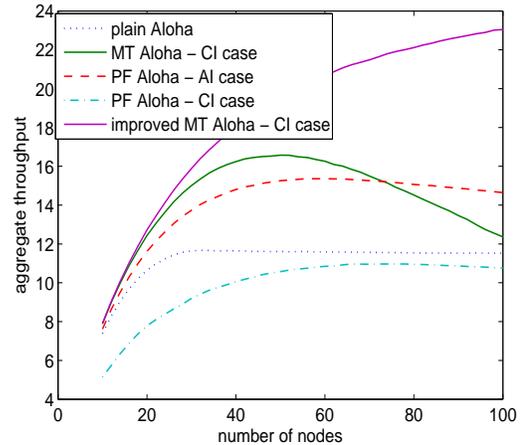}
\caption{Throughputs of various medium access schemes as a function of the number of nodes.
MT, PF, CI and AI stand for {\it maximum throughput, proportional fair, closest interferer}
and {\it aggregate interference} respectively.}
\label{fig:throughput}
\end{figure}

In Figure~\ref{fig:log-throughput}, we plot the logarithms of the aggregate throughputs
corresponding to the two proportional fair medium access schemes.
The figure illustrates that as the number of nodes
increases, the performance of the closest interferer based medium access
worsens in comparison to the performance of the aggregate interference based scheme -
this is not visible merely looking at the corresponding aggregate throughputs~(see Figure~\ref{fig:throughput})).

\begin{figure}[h]
\centering
\includegraphics[width=3.0in,height=2.5in]{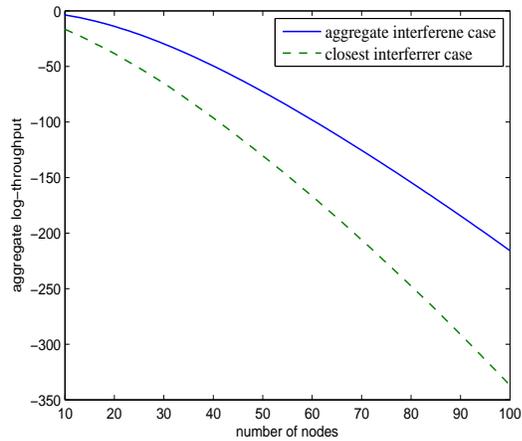}
\caption{Performance of the two proportional fair medium access schemes as a function
of the number of nodes.}
\label{fig:log-throughput}
\end{figure}

In Figure~\ref{fig:map-cdf}, we plot the c.d.f. of the MAP
for the proportional fair case. Our objective is to compare the case when nodes account for the
aggregate interference with when they account for the closest interferer only.
We set $L = 40$ and plot MAP distributions corresponding to two values of $N$, $400$
 and $800$~(corresponding to $\lambda = 0.25$ and $\lambda = 0.5$ respectively).
As expected, the nodes attempt more aggressively when they account for the closest interferer only.
While there is a favored probability interval in the aggregate interference case, there are more than one
such intervals in the closest interferer case. Also, in the latter case, about $34\%$ of nodes  attempt
in almost all slots, irrespective of $N$, the total number of nodes. This can be understood by
noticing that the probability that a node is not the closest interferer to any other node is not sensitive
to $N$.

\begin{figure}[h]
\centering
\includegraphics[width=3.0in,height=2.5in]{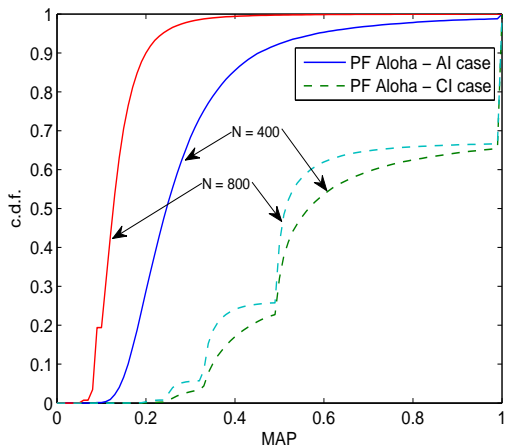}
\caption{Cumulative distribution function of the MAP for the proportional fair case.}
\label{fig:map-cdf}
\end{figure}

\section{Conclusion}
We have shown the feasibility of the performance analysis
of distributed adaptive protocols that aim at maximizing
some global utility in a large random network using stochastic
geometry.

More precisely, the most natural distributed adaptation of the medium
access probability of Aloha that aims at proportional fairness
optimization was shown to have a tractable optimal MAP distribution. This distribution
is obtained from the law of a certain shot noise field that describes
the interference created by a typical node to all receivers but his.
In the Poisson case, the distribution of the optimal MAP is obtained
as a non-singular contour integral which is amenable to an efficient
evaluation using classical numerical tools. The network performance
at optimum can in turn be deduced from the latter using Campbell's formula.

This approach is shown to provide an analytic way
of quantifying the gains brought by this proportionally
fair adaptive version of Aloha compared to plain Aloha.

This line of thoughts opens several research directions.
The first one is the extension to other types of fairness,
still in the framework of Aloha. We indicate that this is possible at least under
certain simplifications of the interference model.
The second and broader question is whether this approach
can be extended to MAC protocols other than Aloha.
An example would be an adaptation of the exclusion
radius of CSMA/CA to the full environment of a node aiming at
maximizing some utility of the throughput.
A third general question concerns the evaluation of the
``price of decentralization''.
When the discussed protocols are suboptimal because
of their greedy/distributed nature, is it possible
to use stochastic geometry to evaluate the typical discrepancy between the performance
of the distributed scheme and the optimal centralized one?

\section{Acknowledgments}
This work was carried out at Laboratory of Information, Networking
and Communication Sciences~(LINCS) Paris, and was supported by
INRIA-Alcatel Lucent Bell Labs Joint Research Center.

\remove{
%
\bibliographystyle{abbrv}
\bibliography{adaptive-spatial-aloha}  

\begin{thebibliography}{10}

\bibitem{ctrltheory-comnets.abramson70aloha}
N.~Abramson.
\newblock {THE ALOHA SYSTEM}: another alternative for computer communications.
\newblock In {\em Proceedings of the fall joint computer conference~(AFIPS)},
  Montvale, New Jersey, November 1970.

\bibitem{stochproc-wireless.baccelli-blaszczyszyn09stochastic-geometry-wireless-networks-1}
F.~Baccelli and B.~Blaszczyszyn.
\newblock {\em Stochastic Geometry and Wireless Networks Volume~I: Theory},
  volume 3~issue~3-4 of {\em Foundations and Trends in Networking}.
\newblock NOW Publishers, 2009.

\bibitem{stochproc-wireless.baccelli-blaszczyszyn09stochastic-geometry-wireless-networks-2}
F.~Baccelli and B.~Blaszczyszyn.
\newblock {\em Stochastic Geometry and Wireless Networks Volume~II:
  Applications}, volume 4~issue~1-2 of {\em Foundations and Trends in
  Networking}.
\newblock NOW Publishers, 2009.

\bibitem{ctrltheory-wireless.baccelli-etal06aloha-multihop-wireless}
F.~Baccelli, B.~Blaszczyszyn, and P.~Muhlethaler.
\newblock An {Aloha} protocol for multihop mobile wireless networks.
\newblock {\em IEEE Transactions on Information Theory}, 52(2):421--436,
  February 2006.

\bibitem{ctrltheory-wireless.baccelli-etal09spatial-opportunistic-aloha}
F.~Baccelli, B.~Blaszczyszyn, and P.~Muhlethaler.
\newblock Stochastic analysis of spatial and opportunistic {Aloha}.
\newblock {\em IEEE Journal on Selected Areas in Communications},
  27(7):1105--1119, September 2009.

\bibitem{DBLP:journals/questa/BaccelliLRSSW12}
F.~Baccelli, J.~Li, T.~Richardson, S.~Shakkottai, S.~Subramanian, and X.~Wu.
\newblock On optimizing {CSMA} for wide area ad hoc networks.
\newblock {\em Queueing Systems}, 72(1-2):31--68, 2012.

\bibitem{commnet.bertsekas-gallager92data-networks}
D.~P. Bertsekas and R.~G. Gallager.
\newblock {\em Data Networks}.
\newblock Prentice Hall, 1992.

\bibitem{ctrltheory.boyd-vandenberghe04convex-optimization}
S.~Boyd and L.~Vandenberghe.
\newblock {\em Convex Optimization}.
\newblock Cambridge University Press, 2004.

\bibitem{stochproc.bremaud99markov-chains}
P.~Bremaud.
\newblock {\em Markov Chains: Gibbs Fields, Monte Carlo Simulation, and
  Queues}.
\newblock Number~31 in Texts in Applied Mathematics. Springer-Verlag, 1999.

\bibitem{stochproc.franceschetti-meester08random-networks}
M.~Franceschetti and R.~Meester.
\newblock {\em Random Networks for Communication}.
\newblock Cambridge University Press, 2008.

\bibitem{ctrltheory-wireless.hazek85-decentralized-control-multiaccess}
B.~Hajek.
\newblock Stochastic approximation methods for decentralized control of
  multiaccess communications.
\newblock {\em IEEE Transactions on Information Theory}, 31(2):176--184, March
  1985.

\bibitem{gametheory-wireless.hanawal-etal12medium-access-games}
M.~K. Hanawal, E.~Altman, and F.~Baccelli.
\newblock Stochastic geometry based medium access games.
\newblock In {\em IEEE Infocom}, Orlando, Florida, USA, March 2012.

\bibitem{ctrltheory-wireless.hsu-su11channel-aware-aloha}
F.-T. Hsu and H.-J. Su.
\newblock Channel-aware {ALOHA} with {SINR} capture: When is the knowledge of
  channel not helpful?
\newblock In {\em Proceedings of IEEE PIMRC}, Toronto, Canada, September 2011.

\bibitem{ctrltheory-wireless.mohsenian-rad-et-al10sinr-random-access}
A.-H. Mohsenian-Rad, V.~W. Wong, and R.~Schober.
\newblock Optimal sinr-based random access.
\newblock In {\em Proceedings of IEEE Infocom}, San Diego, CA, USA, March 2010.

\bibitem{gametheory.monderer-shapley96potential-games}
D.~Monderer and L.~S. Shapley.
\newblock Potential games.
\newblock {\em Games and Economic Behavior}, 14(1):124--143, May 1996.

\bibitem{gametheory.osborne-rubinstein99game-theory}
M.~Osborne and A.~Rubinstein.
\newblock {\em A Course in Game Theory}.
\newblock The MIT press, 1999.

\bibitem{ctrltheory-comsnets.roberts75aloha-slots-capture}
L.~G. Roberts.
\newblock {ALOHA} packet system with and without slots and capture.
\newblock {\em ACM SIGCOMM Computer Communication Review}, 5(2):28--42, April
  1975.

\bibitem{ctrltheory-wireless.wang-kar04max-min-fair-aloha}
X.~Wang and K.~Kar.
\newblock Distributed algorithms for max-min fair rate allocation in aloha
  networks.
\newblock In {\em 42nd Annual Allerton Conference on Communication, Control,
  and Computing}, Urbana, IL, USA, September-October 2004.

\bibitem{ctrltheory-wireless.wang-etal06lexicographic-max-min-fair-access}
X.~Wang, K.~Kar, and J.-S. Pang.
\newblock Lexicographic max-min fair rate allocation in random access wireless
  networks.
\newblock In {\em 45th IEEE Conference on Decision and Control}, San Diego, CA,
  USA, December 2006.

\end{thebibliography}
%
%
}

\appendix
\section{Proof of Theorem~\ref{thm:mean-utility}}
\label{proof:thm-mean-utility}
Let $\Psi=\{Y'_m\}$ be a
point process with marks and $\psi\circ\theta_{z}$
be the point process $\psi$ shifted
by $-z$. Given that $Y_0=(r_0,\phi)$,
$$p_n=h(||X_n - (r_0,\phi)||,\Phi^r\setminus \{(r_0,\phi),Y_n\}\circ\theta_{X_n}),$$
where the mapping $h(u,\Psi)$ associates with
$\Psi$ and the real number $u$ the solution of
$$ \frac{1}{p}=
\frac{1}{\frac{u^{\alpha}}{T r_0^{\alpha}} + 1 - p}
+
\sum_{m}
\frac{1}{\frac{\Vert Y'_m\Vert^{\alpha}}{T r_0^{\alpha}} + 1 - p},$$
with the usual convention if there is no solution in $[0,1]$.
It follows from Slivnyak's theorem that
\begin{eqnarray*}
&&
\mathbb{E}^0
\left[\sum_{n\ne 0} \log
\left(1-\frac{p_n}{\frac{\Vert X_n - Y_0\Vert^{\alpha}}{T r_0^{\alpha}} + 1}\right)
\right]=\\
&&
\hspace{-.7cm}
\mathbb{E}^0
\left[\sum_{n\ne 0} \log
\left(1-\frac{h(||X_n - Y_0||,\Phi^r\setminus \{Y_0,Y_n\}\circ\theta_{X_n})}{\frac{\Vert X_n - Y_0\Vert^{\alpha}}{T r_0^{\alpha}} + 1}\right)
\right]=\\
&&
\hspace{-.7cm}\frac 1 {2\pi}
\int\limits_0^{2\pi}
\mathbb{E}
\sum_{n} \log
\left(1-\frac{h(||X_n - (r_0,\phi)||,\Phi^r \setminus \{Y_n\}\circ\theta_{X_n})}
{\frac{\Vert X_n-(r_0,\phi)\Vert^{\alpha}}{T r_0^{\alpha}} + 1}\right)
{\rm d}\phi.
\end{eqnarray*}
It now follows from Campbell's formula that
\begin{eqnarray*}
&& \mathbb{E}
\left[\sum_{n\ne 0} \log
\left(1-\frac{h(||X_n-(r_0,\phi)||,\Phi^r\setminus\{Y_n\}\circ\theta_{X_n})}{\frac{\Vert X_n-(r_0,\phi)\Vert^{\alpha}}{T r_0^{\alpha}} + 1}\right)
\right]\\
&&
\hspace{-.6cm}
=
\int_{\mathbb{R}^2}
\int_v  \log\left(1-\frac{v\overline r_0}{||x-(r_0,\phi)||^\alpha+\overline r_0}\right)
\lambda {\rm d}x\\ & & \hspace{3cm}
\mathbb{P}^0 ( h(||x-(r_0,\phi)||,\Phi^r\setminus\{Y_0\})= {\rm d}v)\\
&&
\hspace{-.6cm}
=
\int_{\mathbb{R}^2}
\int_v  \log\left(1-\frac{v\overline r_0}{||x-(r_0,\phi)||^\alpha+\overline r_0}\right))
\lambda {\rm d}x \\ & & \hspace{3cm} \mathbb{P} ( h(||x-(r_0,\phi)||,\Phi^r)= {\rm d}v),
\end{eqnarray*}
where the last relation follows from Slivnyak's theorem.

\remove{
\section{Proof of Remark~\ref{lem:laplace-transform}}
Recall that, for $\alpha = 4$,
\begin{equation}
\mathcal{L}_{J(p,0)}(s) = \exp \left\{-2 \pi \lambda \int_0^{\infty}\left(1 - e^{-\frac{sp\bar{r}_0}{r^4 + (1 - p)\bar{r}_0}}\right)r dr \right\} \label{eqn:lt-alpha-4}
\end{equation}
with $\bar{r}_0 := T r_0^4$. We denote the integral in the exponent with $Int$
In the following we simplify $Int$ via a series of substitutions.
\begin{align*}
Int &:=  \int_0^{\infty}\left(1 - e^{-\frac{sp\bar{r}_0}{r^4 + (1 - p)\bar{r}_0}}\right)r dr \\
    &\overset{\text{(i)}} = \int_0^{\infty}\left(1 - e^{-\frac{b}{t^2 + a}}\right) dt \\
    &= \int_0^{\infty}\left(1 - e^{-\frac{b/a}{(t/\sqrt{a})^2 + 1}}\right) dt \\
    &\overset{\text{(ii)}} = \sqrt{a}\int_0^{\infty}\left(1 - e^{-\frac{b/a}{u^2 + 1}}\right) du \\
    &\overset{\text{(iii)}} = \sqrt{a}\int_0^1 \frac{1 - e^{-(b/a)v^2}}{v^3} \frac{1}{\sqrt{1/v^2 - 1}} dv \\
    &= \sqrt{a}\int_0^1 \frac{1 - e^{-(b/a)v^2}}{v^2 \sqrt{1 - v^2}} dv \\
    &\overset{\text{(iv)}} = \sqrt{(1-p)T}r_0^2 \int_0^1 \frac{1 - e^{-spv^2/(1-p)}}{v^2 \sqrt{1 - v^2}}dv.
\end{align*}
In the above sequence of equalities,
\begin{enumerate}[(i)]
\item is obtained by setting $a = (1-p)\bar{r}_0, b = sp\bar{r}_0$ and $t = r^2$,
\item by setting $u = \frac{t}{\sqrt{a}}$,
\item by setting $v^2 = \frac{1}{1 + u^2}$, and
\item by substituting the values of $a,b$ and  $\bar{r}_0$,
\end{enumerate}
The desired result is obtained after substituting the final expression of $Int$ in~\eqref{eqn:lt-alpha-4}.
}

\end{document}